\newtheorem{theorem}{Theorem}
\newtheorem{observation}{Observation}
\definecolor{purple}{RGB}{153,50,204}
\newcommand{\ABP}{ABP}
\begin{document}

\title{A note on computational approaches for the antibandwidth problem}

\author[1]{Markus Sinnl\thanks{markus.sinnl@jku.at}}

\affil[1]{Department of Statistics and Operations Research, Faculty of Business, Economics and Statistics, University of Vienna, Vienna, Austria \newline
	Institute of Production and Logistics Management, Johannes Kepler University Linz, Linz, Austria}

\title{A note on computational approaches for the antibandwidth problem}

\date{}
\maketitle

\begin{abstract}
In this note, we consider the antibandwidth problem, also known as dual bandwidth problem, separation problem and maximum differential coloring problem. Given a labeled graph (i.e., a numbering of the vertices of a graph), the antibandwidth of a node is defined as the minimum absolute difference of its labeling to the labeling of all its adjacent vertices. The goal in the antibandwidth problem is to find a labeling maximizing the antibandwidth. The problem is NP-hard in general graphs and has applications in diverse areas like scheduling, radio frequency assignment, obnoxious facility location and map-coloring.

There has been much work on deriving theoretical bounds for the problem and also in the design of metaheuristics in recent years. However, the optimality gaps between the best known solution values and reported upper bounds for the HarwellBoeing Matrix-instances, which are the commonly used benchmark instances for this problem, are often very large (e.g., up to 577\%). Moreover, only for three of these 24 instances, the optimal solution is known, leading the authors of a state-of-the-art heuristic to conclude "HarwellBoeing instances are actually a challenge for modern heuristic methods". The upper bounds reported in literature are based on the theoretical bounds involving simple graph characteristics, i.e., size, order and degree, and a mixed-integer programming (MIP) model.

We present new MIP models for the problem, together with valid inequalities, and design a branch-and-cut algorithm and an iterative solution algorithm based on them. These algorithms also include two starting heuristics and a primal heuristic. We also present a constraint programming approach, and calculate upper bounds based on the stability number and chromatic number. Our computational study shows that the developed approaches allow to find the proven optimal solution for eight instances from literature, where the optimal solution was unknown and also provide reduced gaps for eleven additional instances, including improved solution values for seven instances, the largest optimality gap is now 46\%.
\end{abstract}


\section{Introduction and motivation \label{sec:intro}}

Graph labeling problems are an important class of problems, which have been studied since the 1960s. In such problems, we are given a graph and we want to find a labeling (i.e., a numbering of its vertices), such that a given objective function is optimized. Problems in this class include the \emph{bandwidth problem} \citep{cuthill1969reducing,caprara2005laying} and variants of it like \emph{cyclic bandwidth} \citep{rodriguez2015tabu}, the \emph{linear arrangement problem} \citep{caprara2011decorous,rodriguez2008effective} and the \emph{cutwidth problem} \citep{marti2013branch}, see also the surveys \citep{diaz2002survey,gallian2009dynamic}. In this work, we consider the \emph{antibandwidth problem} (\ABP), also known as \emph{dual bandwidth problem} \citep{yixun2003dual}, \emph{separation problem} \citep{miller1989separation} and \emph{maximum differential coloring problem} \citep{bekos2014note}. The \ABP\ is NP-hard in general graphs and has applications in scheduling \citep{leung1984some}, radio frequency assignment \citep{hale1980frequency}, obnoxious facility location \citep{cappanera1999survey} and map-coloring \citep{gansner2010gmap}.

\paragraph{Problem definition}
Let $G=(V,E)$ be a graph, where $V$ is the set of vertices and $E$ is the set of edges, and let $n=|V|$ and $m=|E|$
A labeling $f$ of the vertices is a bijection $V\rightarrow \{1,\ldots,n\}$, i.e., each vertex $i \in V$ gets a unique label $f(i) \in \{1,\ldots,n\}$. For a graph $G$ and a labeling $f$, the antibandwidth $AB_f(G)$ is
\begin{align*}
AB_f(G)=\min\{AB_f(i): i \in V \} 
\end{align*}
where 
\begin{align*}
AB_f(i)=\min\{|f(i)-f(i')|: \{i,i'\} \in E \} 
\end{align*}

is the minimum bandwidth of a vertex $i \in V$ (we will also call this antibandwidth of $i$). Let $\mathcal F(G)$ denote all labelings of $G$. The \ABP\ consists of finding a labeling $f^*$ that maximizes $AB_f(G)$  and the corresponding value $AB_{f^*}(G)$ is called antibandwidth $AB(G)$ of the graph, i.e.,
\begin{align*}
AB(G)=\max_{f \in \mathcal F} AB_f(G)
\end{align*}

For ease of readability, we write $AB_f$ instead of $AB_f(G)$ in the following.
For later use, for two numbers (labels) $a, a'$, let $d(a,a')=|a-a'|$ and for a set of numbers $A$, let $d(a,A)=\min_{a' \in A}|a-a'|$; for a vertex $i \in V$, let $\delta(i)$ denote its degree and $\Delta^+=\max_{i \in V} \delta(i)$, $\Delta^-=\min_{i \in V} \delta(i)$, denote the maximum, resp., minimum degree of a vertex in the considered graph.
Figure \ref{fig:example} shows an exemplary instance of the \ABP\ together with an optimal labeling.

\tikzstyle{vertex}=[circle,fill=black!15,minimum size=20pt,inner sep=0pt]
\tikzstyle{edge} = [draw,thick,-]
\begin{figure}[h!tb]
\begin{subfigure}[b]{.5\linewidth}
\centering
\begin{tikzpicture}[scale=1.5]
\foreach \pos/\name/\type in {{(0,2)/A/vertex}, {(1,2)/B/vertex}, {(2,2)/C/vertex},{(0,1)/D/vertex},{(1,1)/E/vertex},{(2,1)/F/vertex},{(0,0)/G/vertex},{(1,0)/H/vertex},{(2,0)/I/vertex}}
        \node[\type] (\name) at \pos {$\name$};
\foreach \source/ \dest in  
    {A/B,B/C,A/D,D/E,B/E,C/F,E/F,D/G,G/H,E/H,H/I,F/I}
    \path[edge] (\source) --  (\dest);
\end{tikzpicture}
\caption{Instance}\label{fig:instance}
\end{subfigure}%
\begin{subfigure}[b]{.5\linewidth}
\centering
\begin{tikzpicture}[scale=1.5]
\foreach \pos/\name/\type in {{(0,2)/5/vertex}, {(1,2)/1/vertex}, {(2,2)/6/vertex},{(0,1)/2/vertex},{(1,1)/7/vertex},{(2,1)/3/vertex},{(0,0)/8/vertex},{(1,0)/4/vertex},{(2,0)/9/vertex}}
        \node[\type] (\name) at \pos {$\name$};
\foreach \source/ \dest in  
    {A/B,B/C,A/D,D/E,B/E,C/F,E/F,D/G,G/H,E/H,H/I,F/I}
    \path[edge] (\source) --  (\dest);
\end{tikzpicture}

\caption{Optimal labeling}\label{fig:solution}
\end{subfigure}
\caption{Instance $G$ and optimal solution, $AB(G)=3$, as for edge $\{C,F\}$, we have $|f(C)-f(F)|=|6-3|=3$ (edge $\{E,H\}$ also gives value three). \label{fig:example}}
\end{figure}
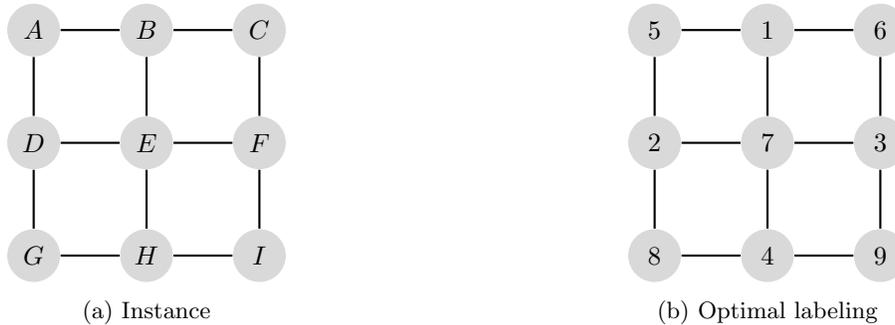

%

\paragraph{Previous work}

In \cite{miller1989separation,yixun2003dual} various theoretical bounds for general graphs based on graph parameters like size, order, degree, stability number and chromatic number are presented. For certain classes of graphs like Hamming graphs \citep{dobrev2013antibandwidth}, hypercubes \citep{raspaud2009antibandwidth,wang2009explicit}, complete $k$-ary trees \citep{calamoneri2009antibandwidth}, caterpillars and spiders \citep{bekos2013maximum,bekos2014note} there exist tighter bounds and/or exact algorithms.

For general graphs, a variety of (meta-)heuristic approaches exist: \cite{bansal2011memetic} proposed a memetic algorithm, \cite{duarte2011grasp} develops a generalized randomized adaptive search procedure with path relinking, \cite{lozano2012variable} presented a variable neighborhood search and \cite{scott2014level} designed a hill-climbing algorithm. \cite{duarte2011grasp} also introduced a mixed-integer programming (MIP) model for the exact solution of the \ABP, see Section \ref{sec:previous} for the model.

\paragraph{Contribution and outline}

While there has been much work on deriving theoretical bounds for the problem and also in the design of metaheuristics, the optimality gaps between the best known solution values and reported upper bounds for the HarwellBoeing Matrix-instances, which are the commonly used benchmark instances for this problem, are often very large (e.g., up to 577\%, see Table \ref{ta:main}). Only for three of the 24 instances, the optimal solution is known. Aside from the upper bounds provided by the MIP of \cite{duarte2011grasp}, the upper bounds reported in literature are based on the theoretical bounds involving simple graph characteristics, i.e., size, order and degree, leading to the conclusion "On the contrary, the CBT, Hamming
and HarwellBoeing instances are actually a challenge for modern heuristic methods" in \cite{lozano2012variable}, which presents a the state-of-the-art heuristic for the problem\footnote{for the instance sets CBT, which are complete binary trees and Hamming, which are Hamming graphs, graph-specific algorithms producing the optimal solution are known, see the \emph{Previous Work} paragraph above}.

In this note, we present two new MIP formulations for the problem and design a branch-and-cut algorithm and an iterative solution algorithm based on them. The branch-and-cut algorithms include valid inequalities, two starting heuristics and a primal heuristic. We also calculate bounds by using the stability number and chromatic numbers (these calculations are also done using MIPs to solve the associated NP-hard problems).  
The developed approaches and calculations allow to find the proven optimal solution for eight instances, where the optimal solution was not known, and reduced gaps for eleven additional instances, including seven improved solution values. The results reveal that the heuristics from literature presented for this problem actually work quite well, and the large optimality gaps reported so far are mainly caused by weak upper bounds.

In Section \ref{sec:previous} we recall the theoretical bounds known for the problem and also the MIP approach of \cite{duarte2011grasp}, and also discuss calculation of the stability number and chromatic number. In Section \ref{sec:mip} we present our new MIP models and also describe further details of our branch-and-cut algorithm and the iterative solution algorithm, including valid inequalities and heuristics. Section \ref{sec:cp} contains our constraint programming formulation.
Section \ref{sec:comp} details the obtained computational results, and Section \ref{sec:concl} concludes the paper.

\section{Upper bounds for the ABP \label{sec:previous}}

The following graph theoretic bounds are known.

\begin{theorem}[\cite{miller1989separation,yixun2003dual}]\label{thm:bounds}
Let $G$ be a connected graph, $\alpha(G)$ be the stability number of $G$ and $\chi(G)$ be the chromatic number of the graph. Then the following holds
\begin{enumerate}
\item $AB(G)\leq \min \Big\{\lfloor \frac{n-\Delta^-+1}{2}  \rfloor, n-\Delta^+  \Big\}$
\item $AB(G)\leq \lfloor n-\frac{\sqrt{8m+1}-1}{2} \rfloor$
\item $AB(G)\leq \alpha(G)$
\item $AB(G)\leq \lfloor \frac{n-1}{\chi(G)-1} \rfloor$
\end{enumerate}
\end{theorem}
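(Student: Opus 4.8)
The plan is to derive all four inequalities from two elementary observations about an \emph{optimal} labeling, so I fix a labeling $f^*$ attaining $AB_{f^*}=AB(G)=:t$ and recall its defining property: for every vertex $v$ and every neighbour $v'$ of $v$, $|f^*(v)-f^*(v')|\ge t$. The first observation is a \emph{neighbourhood-counting} bound: the $\delta(v)$ neighbours of $v$ must receive distinct labels lying at distance at least $t$ from $f^*(v)$, so $\delta(v)\le a(f^*(v))$, where $a(k)=\max(0,k-t)+\max(0,n+1-t-k)$ counts the labels in $\{1,\dots,n\}$ at distance $\ge t$ from $k$. Applying this to a vertex of maximum degree and using that $a$ is largest at the extreme labels, $\max_k a(k)=n-t$ (attained at $k=1$ and $k=n$), gives $\Delta^+\le n-t$, i.e.\ the bound $t\le n-\Delta^+$ in item~1.

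For the other half of item~1 the naive idea of specialising the neighbourhood bound to a minimum-degree vertex fails, because that vertex might sit at a boundary label where $a$ is large; the fix is to use that \emph{every} label is occupied. The label $k=t$ is used by some vertex $w$, and the labels at distance $\ge t$ from $t$ are exactly $\{2t,\dots,n\}$, so $a(t)=n-2t+1$; since $\delta(w)\ge\Delta^-$ this yields $\Delta^-\le n-2t+1$, hence $t\le (n-\Delta^-+1)/2$, and integrality gives the floor. The one case needing care is $2t>n$, where $\{2t,\dots,n\}$ is empty: then $a(t)=0$ would force $w$ to have no neighbour, contradicting $\Delta^-\ge1$ for a connected graph, so this case does not arise. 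I expect this boundary bookkeeping to be the main (though minor) obstacle.

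The second observation handles items~3 and~4: any window of $t$ consecutive labels induces an \emph{independent set}, since two vertices whose labels lie in $\{j,\dots,j+t-1\}$ differ by at most $t-1<t$ and so cannot be adjacent. Taking the window $\{1,\dots,t\}$ exhibits an independent set of size $t$, giving $t\le\alpha(G)$ (item~3). Partitioning $\{1,\dots,n\}$ into $\lceil n/t\rceil$ consecutive windows of length $t$ (the last possibly shorter) and colouring each window with its own colour yields a proper colouring, so $\chi(G)\le\lceil n/t\rceil$; since $\lceil n/t\rceil\ge\chi(G)$ is equivalent to $n>t(\chi(G)-1)$, I get $t(\chi(G)-1)\le n-1$ and, dividing by $\chi(G)-1\ge1$ (a connected graph on $\ge2$ vertices has $\chi(G)\ge2$) and taking the floor, the bound $t\le\lfloor (n-1)/(\chi(G)-1)\rfloor$ of item~4.

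Finally, item~2 follows from an edge count. Since $f^*$ is a bijection, distinct edges map to distinct unordered label-pairs, and every such pair differs by at least $t$; the number of pairs in $\{1,\dots,n\}$ at distance $\ge t$ is $\binom{n-t+1}{2}$, so $m\le\binom{n-t+1}{2}=\tfrac12(n-t)(n-t+1)$. Setting $p=n-t$ this reads $p(p+1)\ge 2m$, and solving the quadratic gives $p\ge(\sqrt{8m+1}-1)/2$, i.e.\ $t\le n-(\sqrt{8m+1}-1)/2$; the floor again follows from integrality of $t$.
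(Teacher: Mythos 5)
Your proof is correct, but note that the paper does not actually prove Theorem~\ref{thm:bounds}: it states these four bounds as known results, citing \cite{miller1989separation,yixun2003dual}, so there is no in-paper argument to compare against. Judged on its own, your self-contained derivation is sound. The neighbourhood-counting function $a(k)=\max(0,k-t)+\max(0,n+1-t-k)$ is handled correctly: when both terms are positive one gets $a(k)=n+1-2t\le n-t$ (using $t\ge 1$), so the maximum $n-t$ is indeed attained at the extreme labels, giving $t\le n-\Delta^+$; and your fix for the other half of item~1 --- evaluating $a$ at the occupied label $k=t$ rather than at a minimum-degree vertex, whose label you cannot control --- is exactly the right move, including the boundary case $2t>n$, which you correctly rule out via connectivity. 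The edge-counting bound $m\le\binom{n-t+1}{2}$ for item~2 is justified by injectivity of $f^*$ (distinct edges give distinct label pairs, each of difference at least $t$), and the quadratic manipulation is valid since $p\mapsto p(p+1)$ is increasing for $p\ge 0$. For items~3 and~4, the window argument (any $t$ consecutive labels induce an independent set) is the standard and correct device; your equivalence $\lceil n/t\rceil\ge\chi(G)\iff n>t(\chi(G)-1)$ holds because $\chi(G)$ is an integer, and the division by $\chi(G)-1$ is legitimate since a connected graph on at least two vertices has an edge, hence $\chi(G)\ge 2$. The only degenerate case left implicit is $n=1$, where item~4 is vacuous anyway; this does not affect the validity of the argument.
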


Note that in previous work presenting heuristic approaches for the problem, aside from using the bound provided by the MIP in \cite{duarte2011grasp} (see below), only the first two bounds stated in Theorem \ref{thm:bounds} have been used to assess the quality of the generated heuristic solutions.

To calculate the bounds 3. and 4. in Theorem \ref{thm:bounds}, one needs to calculate the stability number $\alpha(G)$, resp., the chromatic number $\chi(G)$, i.e., one needs to solve the NP-hard \emph{(maximum) stable set problem (SSP)} (also known as \emph{independent set problem}), resp., \emph{(minimum) graph-coloring problem (GCP)}. Both problems are well-studied in literature and there are many different (exact and heuristic) solution approaches for it, see, e.g., the tutorial \citep{rebennack2012tutorial} and the surveys \citep{galinier2006survey,malaguti2010survey} for more details on these problems. For our purposes to calculate valid bounds for the ABP, we need the exact solution value (or the value of a relaxation). To calculate these values, we used standard MIP-models for both problems: For the SSP (see, e.g., \cite{rebennack2012tutorial}), let binary variable $x_i=1$, iff vertex $i \in V$ is in the stable set. The problem can be formulated as follows.

\begin{align}
\alpha(G)=\max_{x \in \{0,1\}^{|V|}} \Big\{\sum_{i  \in V} x_i : x_i +x_{i'} \leq 1, \forall \{i,i'\} \in E \Big\}. \label{eq:SSP} \tag{SSP}
\end{align}

For the GCP (see, e.g., \cite{mendez2006branch,mendez2008cutting}), let binary variable $x^c_i=1$, iff vertex $i \in V$ gets color $c \in \{1,\ldots,|V|\}$ in the solution, and let binary variable $w_c=1$, iff color $c \in \{1,\ldots,|V|\}$ is used in the solution. The problem can be formulated as follows. 

\begin{align*}
\chi(G)=\min_{x \in \{0,1\}^{|V|}} \Big\{\sum_{c \in \{1,\ldots,|V|\}} w_c : x^c_i +x^c_{i'} \leq w_c, \forall \{i,i'\} \in E , \forall c \in \{1,\ldots,|V|\}, 
\sum_{c \in \{1,\ldots,|V|\}} x_i^c=1, \forall i \in V \Big\}.
\end{align*}

To speed-up computation, instead of $c \in \{1,\ldots,|V|\}$, we use $c \in \{1,\ldots,|UB(\chi(G))|\}$, where $UB(\chi(G))$ is the value of an upper-bound for $\chi(G)$ obtained by a simple greedy heuristic \citep{leighton1979graph} for the GCP, the heuristic solution inducing this upper bound value is also given as starting solution to the MIP-solver.


\subsection{Mixed-Integer Programming approach of \cite{duarte2011grasp}}

In \cite{duarte2011grasp}, the following MIP-model based on a big-M formulation is presented. Let binary variables $x_i^\ell$ take the value one if and only if vertex $i$ gets label $\ell$ (i.e., $f_i=\ell$). The following set of assignment constraints \eqref{eq:a-sum} and \eqref{eq:b-sum} make sure that every vertex gets an unique labeling.

\begin{align} 
\sum_{i \in V } x^\ell_i&=1 &\quad \forall \ell \in\{1,\ldots,|V|\} \label{eq:a-sum} \tag{VERTICES}\\
\sum_{\ell \in\{1,\ldots,|V|\} } x^\ell_i&=1 &\quad \forall i \in V. \label{eq:b-sum} \tag{LABELS} 
\end{align} 

Let integer variables $l_i\in \{1,\ldots,|V|\}$ indicate the labeling of vertex $i \in V$. The $x$-variables and $l$-variables can be linked with the following set of constraints

\begin{align} 
\sum_{1\leq \ell \leq |V|} \ell x^\ell_i&=l_i &\quad \forall i \in\{1,\ldots,|V|\}. \label{eq:link} \tag{LINK}  
\end{align} 

Finally, let binary variables $y_{ii'}$ and $z_{ii'}$ for $\{i,i'\} \in E$ indicate whether $i$ has a smaller label than $i'$; if it has a smaller label, then $y_{ii'}=1$, otherwise $z_{ii'}=1$ (one could get rid of one set of these variables, but we want to follow \citep{duarte2011grasp} exactly), and let variable $b$ measure the value of the antibandwidth of the solution. The \ABP\ can than be formulated as follows (denoted as $(F_{lit})$).

\begin{align}
\max \quad & b & \notag \\
\eqref{eq:a-sum} & , \eqref{eq:b-sum}, \eqref{eq:link} & \notag \\
b - (l_i-l_{i'}) -2(|V|-1) y_{ii'}&\leq 0 & \forall \{i,i'\} \in E \label{eq:obj1} \tag{OBJ-1} \\
b - (l_{i'}-l_{i}) -2(|V|-1) z_{ii'} &\leq 0 & \forall \{i,i'\} \in E \label{eq:obj2} \tag{OBJ-2}\\
y_{ii'}+z_{ii'} & =1  & \forall \{i,i'\} \in E\label{eq:obj3} \tag{OBJ-3} \\
x^\ell_i &\in\{0,1\} & \forall i \in V, \forall \ell \in \{1,\ldots, |V| \} \notag \\
l_i & \in \{1,\ldots,|V|\} & \forall i \in V \notag \\
y_{ii'},z_{ii'} & \in\{0,1\}   & \forall \{i,i'\}\in E \notag
\end{align}

Constraints \eqref{eq:obj1}, \eqref{eq:obj2}, \eqref{eq:obj3}
model $b \leq |l_i-l_{i'}|$, $\forall \{i,i'\} \in E$ in a big-M-constraint style and ensure that $b$ correctly measures the antibandwidth of the solution indicated by the selected $l$ (resp., $x$)-variables: For $\{i,i'\} \in E$, suppose vertex $i$ has a smaller label than $i'$. Hence $b$ can be at most $d_{i'}-d_i$. Since the objective functions maximizes, $y_{ii'}$ will take the value one and $z_{ii'}$ will take the value zero, resulting in $b - (l_i-l_{i'}) \leq 2(|V|-1)$ for \eqref{eq:obj1} and $b - (l_{i'}-l_{i}) \leq 0$ for \eqref{eq:obj2}, which ensures that $b \leq l_{i'}-l_{i}$. The case for $i$ having a larger label than $i'$ works analogously. 
The resulting model has $O(|V|^2)$ variables and $O(|E|)$ constraints.

\section{New Mixed-Integer Programming approaches \label{sec:mip}}

\subsection{New formulation $(F)$}

As a first way to improve formulation $(F_{lit})$, one can downlift the coefficients $2(|V|-1)$ in \eqref{eq:obj1}, \eqref{eq:obj2} to $(|V|-1)+UB$ where $UB$ in any valid upper bound to \ABP\ for the considered instance. This follows from the fact, that $b\leq UB$ and $|l_i-l_{i'}| \leq |V|-1$ for any valid labeling. However, the problem can also be formulated without such big-M-constraints and variables $l$, $y$ and $z$, as shown next (denoted as formulation $F$).

\begin{align}
\max \quad & b & \notag \\
\eqref{eq:a-sum} & , \eqref{eq:b-sum} & \notag \\
b - \sum_{1 \leq \ell' \leq |V|} d(\ell,\ell')(x^{\ell'}_i+x^{\ell'}_{i'}) &\leq 0 &\quad \forall \ell \in \{1,\ldots, |V|\}, \forall \{i,i'\} \in E \label{eq:objn} \tag{OBJ-N}\\ 
x^\ell_i &\in\{0,1\} & \forall i \in V, \forall \ell \in \{1,\ldots, |V| \} \notag
\end{align}

Constraints \eqref{eq:objn} model $b \leq |l_i-l_{i'}|$, $\forall \{i,i'\} \in E$ and ensure that $b$ correctly measures the antibandwidth of the solution indicated by the selected  $x$-variables: For $\{i,i'\} \in E$, let $\ell(i)$ and $\ell(i')$ be the labels indicated by the values of $x^\ell_i$ and $x^\ell_{i'}$. For $\ell=\ell(i)$ the constraint \eqref{eq:objn} reads $b \leq d(\ell(i),\ell(i'))=|\ell(i)-\ell(i')|$, which is exactly as desired, the case for $\ell=\ell(i')$ works analogously. For $\ell\neq\ell(i),\ell(i')$, the constraint \eqref{eq:objn} reads $b \leq d(\ell,\ell(i))+d(\ell,\ell(i'))=|\ell-\ell(i)|+|\ell-\ell(i')|$, and due to the triangle inequality $|\ell-\ell(i)|+|\ell-\ell(i')|\geq |\ell(i)-\ell(i')|$, thus the constraint remains valid also in these cases. Formulation $(F)$ has $O(|V|^2)$-variables and $O(|V||E|)$-constraints.
Given a valid upper bound $UB$, each coefficient $d(\ell,\ell')>UB$ in constraints \eqref{eq:objn} can be downlifted to $UB$, clearly the constraints remain valid.
Moreover, constraints \eqref{eq:objn} are actually a special case of the following set of clique-based constraints.

\begin{observation}
Let $C\subseteq V$ be a set of vertices forming a clique in $G$ and let $L$ be a set of labels with $|L|=|C|-1$. Then inequalities
\begin{equation}
b - \sum_{c \in C} \sum_{1 \leq \ell \leq |V|} d(\ell,L) x^{\ell}_c \leq 0 \label{eq:cliquen} \tag{CLIQUE-N}
\end{equation}
are valid for $(F)$.
\end{observation}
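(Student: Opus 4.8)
The plan is to verify validity directly on the integer-feasible solutions of $(F)$. I would first observe that in any feasible solution the assignment constraints \eqref{eq:a-sum}, \eqref{eq:b-sum} force each vertex $c$ to carry a unique label, which I denote $\ell(c)$; consequently the inner sum collapses, $\sum_{1\le \ell \le |V|} d(\ell,L)\,x_c^\ell = d(\ell(c),L)$, and \eqref{eq:cliquen} is equivalent to the purely combinatorial statement
\begin{equation*}
b \;\le\; \sum_{c\in C} d(\ell(c),L).
\end{equation*}
So it suffices to prove this inequality for the distinct labels $\{\ell(c):c\in C\}$ of the clique vertices. I would also record the one fact about $b$ that I will use: since $C$ is a clique, every pair $c,c'\in C$ is an edge, so taking $\ell=\ell(c)$ in the defining constraint \eqref{eq:objn} (equivalently, by the meaning of the antibandwidth) gives $b \le |\ell(c)-\ell(c')|$ for every such pair.

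The key idea is a pigeonhole argument coupling the $|C|$ labels of the clique with the $|C|-1$ labels in $L$. I would define a map $\phi\colon C \to L$ sending each clique vertex $c$ to a label of $L$ attaining $d(\ell(c),L)=|\ell(c)-\phi(c)|$. Because $|L|=|C|-1<|C|$, the map $\phi$ cannot be injective, so there are two distinct vertices $c,c'\in C$ with a common nearest label $t:=\phi(c)=\phi(c')\in L$.

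For this pair I would apply the triangle inequality exactly as in the validity argument already given for \eqref{eq:objn}:
\begin{equation*}
d(\ell(c),L)+d(\ell(c'),L) = |\ell(c)-t|+|\ell(c')-t| \;\ge\; |\ell(c)-\ell(c')| \;\ge\; b,
\end{equation*}
where the last step uses that $\{c,c'\}$ is a clique edge. Since every remaining term $d(\ell(c''),L)$ is nonnegative, dropping them only decreases the right-hand side, whence $\sum_{c\in C} d(\ell(c),L) \ge d(\ell(c),L)+d(\ell(c'),L) \ge b$, which is the desired inequality; as this holds for every integer-feasible solution, validity follows.

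The only genuinely nontrivial step is the pigeonhole pairing: recognizing that the hypothesis $|L|=|C|-1$ is precisely what forces two clique labels to share a nearest element of $L$, so that a single triangle inequality, combined with the clique-edge bound on $b$, already suffices. Everything else (the collapse of the assignment sums, the nonnegativity of $d(\cdot,L)$, and the triangle inequality itself) is routine. It is worth noting that this recovers \eqref{eq:objn} as the case $C=\{i,i'\}$, $L=\{\ell\}$, where the pigeonhole is trivial and the argument reduces verbatim to the one already stated for \eqref{eq:objn}.
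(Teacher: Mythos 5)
Your proof is correct, and it is organized in a genuinely different (and tighter) way than the paper's. The paper proves validity by a case distinction on how many clique vertices receive labels outside $L$ (equivalently, how many of the active variables have positive coefficient): exactly one (then the left-hand side measures the distance from that vertex to its nearest clique neighbour), all of them (then a pigeonhole-plus-triangle-inequality argument applies), or an intermediate number, which the paper handles by deleting from $L$ the labels occupied by clique vertices and reducing to the previous two cases. Your argument isolates what is really the paper's central step — the pigeonhole forced by $|L|=|C|-1$ on the nearest-label map $\phi\colon C\to L$, followed by $|\ell(c)-t|+|\ell(c')-t|\ge|\ell(c)-\ell(c')|\ge b$ for the colliding pair — and shows it covers all configurations at once, because the triangle inequality degenerates gracefully when a clique vertex's label lies in $L$ (then its distance term is zero and $\phi$ maps it to its own label; this is exactly the paper's ``only one positive coefficient'' case, and at most one vertex of a colliding pair can have label in $L$ since labels are distinct). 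What your version buys is brevity and rigor: the hypothesis $|L|=|C|-1$ is invoked exactly once, there is no bookkeeping with $C^+$, $C'$, $L'$, and you avoid the paper's somewhat informal reduction ``we are now in a similar case to case 2.'' What the paper's case analysis buys is interpretive content — it makes explicit that in the degenerate configurations the inequality measures the distance to the nearest clique vertex, which is useful intuition for separation and tightness — but as a proof of validity your single pigeonhole argument is cleaner and fully subsumes it.
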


\begin{proof}
For any labeling, at least one of the $x^\ell_i$ variables (with $x^\ell_i=1$ in this labeling) in \eqref{eq:cliquen} has a positive coefficient, as there are $|C|$ vertices in the clique, but only $|C|-1$ labels. The proof proceeds by a case distinction on the number of positive coefficients of variables with $x^\ell_i=1$ for a given labeling:
\begin{enumerate}
\item There is only one variable, say $x^{\ell^*}_{i^*}$, with positive coefficient, i.e., all other vertices in the clique are labeled with labels from $L$. Thus $\eqref{eq:cliquen}$ measure exactly the distance from $i^*$ to the "nearest" vertex in the clique, which is a valid upper bound for $b$. 
\item All variables have positive coefficient, i.e., none of the vertices in $C$ gets a label from $L$. In this case, a similar triangle-inequality-based argument as for \eqref{eq:objn} holds, as for the labels $\ell_i, \ell_{i'}$ of at least one edge $\{i,i'\}$ involved in the clique $C$, it must hold that the label $\ell' \in L$ inducing $d(\ell_i,L)$ and $d(\ell_{i'},L)$ must be the same (due to $|C|=|L|+1$).  
\item More than one, but not all variables have positive coefficient, i.e., between one and $|C|-2$ vertices in $C$ gets a label from $L$. We make an additional case distinction.
\begin{enumerate}
\item First, suppose for one of the variables $x^\ell_i$ (corresponding to vertex $i$ with label $\ell_i$) with positive coefficient, distance $d(\ell_i,L)$ gets induced by a label of a vertex in $C$. Thus, the inequality measures at least the distance from vertex $i$ to the "nearest" vertex in the clique similar to case 1 of this proof.
\item Next, suppose for none of the variables $x^\ell_i$ with positive coefficient the distance $d(\ell_i,L)$ gets induced by a label of a vertex in $C$. Let $C^+$ be the vertices in $C$ with positive coefficient and let $C'$ the remaining vertices in $C$ (i.e., the ones with labels in $L$). Let $L'$ be the set of labels after removing from $L$ all the labels of vertices in $C'$, we have that $|L'|=|L|-|C'|=|C|-1-|C'|=|C^+|-1$. As by assumption of this subcase, for each vertex $i \in C^+$, the coefficient in the inequality gets induced by $d(\ell_i,L')$, we are now in a similar case to case 2 of this proof.
\end{enumerate}
\end{enumerate}
\qed
\end{proof}

Similar to \eqref{eq:objn}, the coefficients in \eqref{eq:cliquen} can be downlifted using $UB$. Following is another set of valid inequalities. 

\begin{observation}
	Let $i \in V$ and $\ell \in L$ and $d\in \mathbb N$ a given distance. Then inequalities 
	\begin{equation}
	b \leq |V| + (|V|-d)-(|V|-d)\big(\sum_{\ell':d(\ell,\ell')\leq d}x^{\ell'}_i+\sum_{i':\{i,i'\}\in E}x^\ell_{i'} \big) \label{eq:valid2}\tag{VERTEX-N}
	\end{equation}
	are valid for $(F)$.
	
\end{observation}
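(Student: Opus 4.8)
My plan is to establish validity directly for every integer-feasible solution of $(F)$, i.e., for every labeling $f$ encoded by the $x$-variables together with any compatible value of $b$. The organizing idea is to read off what the bracketed expression on the right-hand side can possibly evaluate to. Writing $S = \sum_{\ell':d(\ell,\ell')\leq d}x^{\ell'}_i+\sum_{i':\{i,i'\}\in E}x^\ell_{i'}$, I would first argue that $S\in\{0,1,2\}$ on any feasible point. By constraint \eqref{eq:b-sum} vertex $i$ carries exactly one label, so the first sum is the $0/1$ indicator of the event ``the label of $i$ lies within distance $d$ of $\ell$''; and by constraint \eqref{eq:a-sum} at most one vertex carries the label $\ell$, so the second sum is the $0/1$ indicator of the event ``the unique vertex labeled $\ell$ is a neighbour of $i$''. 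Both sub-sums are therefore binary, giving $S\in\{0,1,2\}$.

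The rest is a three-way case distinction on $S$, using throughout the trivial a-priori bound $b\le |V|-1$ (the antibandwidth of a graph with at least one edge cannot exceed the largest label difference). If $S=0$, the right-hand side equals $2|V|-d\ge |V|$ (taking $d\le |V|$), and if $S=1$ it equals exactly $|V|$; in both cases the right-hand side is at least $|V|>|V|-1\ge b$, so the inequality holds vacuously. The only informative case is $S=2$, where the right-hand side collapses to $|V|+(|V|-d)-2(|V|-d)=d$. Here both indicator events occur simultaneously: the label $\ell(i)$ of $i$ satisfies $d(\ell,\ell(i))\le d$, and some neighbour $i'$ of $i$ is labeled $\ell$. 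Then the edge $\{i,i'\}\in E$ has label difference $|\ell(i)-\ell|=d(\ell(i),\ell)\le d$, and since $b$ is bounded by the antibandwidth of $i$ (formally by constraint \eqref{eq:objn} applied to this edge), we get $b\le d$, matching the right-hand side.

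There is no genuinely hard step here; the ``main obstacle'' is really just the bookkeeping of recognizing that the two sub-sums are each binary, which pins $S$ down to three values and makes the constant $|V|+(|V|-d)$ interpretable. The design of the inequality is such that the coefficient $-(|V|-d)$ is tuned precisely so that $S=2$ yields the tight bound $b\le d$ while $S\le 1$ leaves a slack that the crude bound $b\le |V|-1$ absorbs. I would also remark that the hypothesis should be read as $d\in\{1,\dots,|V|\}$ and $\ell$ an arbitrary label in $\{1,\dots,|V|\}$, so that $|V|-d\ge 0$ and the two slack cases are indeed non-restrictive; under this reading the case analysis above is exhaustive and completes the argument.
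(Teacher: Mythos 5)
Your proof is correct and follows essentially the same route as the paper's: both arguments observe that the bracketed sum lies in $\{0,1,2\}$ (each sub-sum being at most one), dismiss the cases of sum zero or one as slack since the right-hand side is then at least $|V|$, and in the case of sum two collapse the right-hand side to $d$, which is a valid bound on $b$ because the edge $\{i,i'\}$ then has label difference at most $d$. Your additional bookkeeping (citing \eqref{eq:a-sum}, \eqref{eq:b-sum} for the binarity of the sub-sums, the a-priori bound $b\le|V|-1$, and the reading $d\le|V|$) merely makes explicit what the paper leaves implicit.
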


\begin{proof}
The sum of the $x$-variables in \eqref{eq:valid2} can be at most two, as both the first and second sum can be at most one. It is easy to see that if the sum is zero or one, the inequality is valid, since the right-hand-side (rhs) in these cases is $|V| + (|V|-d)$, resp., $|V|$. In case the sum of the $x$-variables is two, the rhs is $d$ and thus the inequality reads $b\leq d$, i.e., the maximal antibandwidth of the labeling induced by these $x$-variables is at most $d$. As a sum of two for the $x$-variables implies, that one of the vertex adjacent to $i$ must have label $\ell$ (due to the second sum), and also that vertex $i$ must have a label $\ell'$ within distance $d$ of $\ell$ (due to the first sum), this is a correct estimation.
\qed
\end{proof}	

If an upper bound $UB$ is known, $|V|$ in \eqref{eq:valid2} can be downlifted to $UB$ and only $d < UB$ have to be considered.

\subsection{New formulation $(F_E)$ and an iterative MIP approach}

We now present an extended formulation denoted as $(F_E)$. Let binary variable $b_\ell$, $\ell \in \{1,\ldots, |V|\}$ be one, if and only if the antibandwidth of a solution is $\ell$. The \ABP\ can be modeled as 

\begin{align}
\max \quad & \sum_{1\leq \ell \leq |V|} \ell b_\ell & \notag \\
\eqref{eq:a-sum} & , \eqref{eq:b-sum} & \notag \\
\sum_{1\leq \ell \leq |V|} b_\ell = 1 \label{eq:obje} \tag{OBJ-E} \\
\sum_{\ell_1 < \ell' \leq |V|}  b_{\ell'} + \sum_{\ell_2\leq \ell' \leq \ell_2+\ell_1}(x^{\ell'}_i+x^{\ell'}_i) & \leq 2 & \quad \forall \{i,i'\}\in E, \forall \ell_1 \in \{1,\ldots, |V|\}, 1\leq \ell_2 \leq |V|-\ell_1 \label{eq:obje2} \tag{OBJ-E2} \\
x^\ell_i & \in\{0,1\} & \forall i \in V, \forall \ell \in \{1,\ldots, |V| \} \notag \\
b_\ell  & \in\{0,1\} & \forall \ell \in \{1,\ldots, |V| \} \notag
\end{align}

Constraint \eqref{eq:obje} ensures, that only one variable $b_\ell$ is one, while constraints \eqref{eq:obje2} make sure that the correct variable $b_\ell$, which is compatible with the solution encoded by the $x^\ell_i$-variables is selected: If for an edge $x^\ell_i$-variables, which are within distance $\ell_1$ are one, the constraints ensure that only variables $b_{\ell'}$ with $\ell'<\ell_1$ can be set to one. There are $O(|V|^2)$ variables and $O(|V|^2|E|)$ constraints. Given valid upper and lower bounds $UB$ and $LB$ for the problem, one can remove all variables $b_\ell$ with $\ell>UB$ and $\ell<LB$ and the associated constraints \eqref{eq:obje2}.
However, the resulting MIP is still very large. Thus, we do not use formulation $(F_E)$ directly to solve ABP, but instead, use the following related MIP $(F_E(k))$, which is a feasibility problem, which can be derived to answer the question "Does there exist a solution with $AB(G)\geq k+1$". 

\begin{align}
\max \quad & 0 & \notag \\
\eqref{eq:a-sum} & , \eqref{eq:b-sum} & \notag \\
 \sum_{\ell_2\leq \ell' \leq \ell_2+k}(x^{\ell'}_i+x^{\ell'}_i) & \leq 1 & \quad \forall \{i,i'\}\in E, 1\leq \ell_2 \leq |V|-k \label{eq:objk} \tag{OBJ-k} \\
x^\ell_i & \in\{0,1\} & \forall i \in V, \forall \ell \in \{1,\ldots, |V| \} \notag
\end{align}

The formulation has $O(|V|^2)$ variables and $O(|V||E|)$ constraints. For the given $k$, the set-packing constraints \eqref{eq:objk} ensure that for every edge $\{i,i'\}$, in any feasible solution, the endvertices $i$ and $i$' cannot get labels $f_i, f_{i'}$ which would result in $AB_f\leq k$ for this edge. Thus, any feasible solution to $(F_E(k))$ gives a labeling $f$ with $AB_f \geq k+1$ and also any labeling $f$ with $AB_f\geq k+1$ is a feasible solution for this MIP. Hence, if for a given $k$, $(F_E(k))$ is infeasible, than there is no labeling with $AB_f\geq k+1$. 
Based on $(F_E(k))$, the following simple iterative algorithm to solve \ABP\ can be designed:
\begin{enumerate}
	\item $k \gets 1$
	\item solve $(F_E(k))$
	\item if $(F_E(k))$ is feasible, increase $k$ by one and go back to Step 2
	\item output $k$
\end{enumerate}

In Step 3, instead of increasing $k$ by just one, the antibandwidth of the labeling induced by the solution of $(F_E(k))$ can be used and $k$ for the next iteration can be set to this antibandwidth plus one. Moreover, if a feasible labeling (e.g., obtained by a heuristic) is available, Step 1 can of course start with the value induced by this labeling and not with one.

Similar to $(F)$, the constraints \eqref{eq:objk} of the formulation are actually a special case of a more general family of constraints.
In a first generalization step, we obtain following set of conflict constraints, for which validity follows from their definition and the fact, that every vertex gets exactly one label.

\begin{observation}
For an edge $\{i,i'\}$, let $L_i$ and $L_{i'}$ be two sets of (potential) labels, such that for any $\ell \in L_i$ and $\ell' \in L_{i'}$, we have $|\ell-\ell'| \leq k$. Then inequalities
\begin{equation}
 \sum_{\ell \in L_i}x^{\ell}_i+ \sum_{\ell \in L_{i'}}x^{\ell'}_{i'} \leq 1 \label{eq:con} \tag{CONFLICT}
\end{equation}
are valid for $(F_E(k))$.
\end{observation}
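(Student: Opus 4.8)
The plan is to argue combinatorially, leaning on the characterization of $(F_E(k))$ that was established just above: every feasible integer solution of $(F_E(k))$ corresponds to a labeling $f$ with $AB_f \geq k+1$, that is, one in which every edge $\{i,i'\}$ satisfies $|f(i)-f(i')| \geq k+1 > k$. I would take this equivalence as given rather than re-deriving it from the packing constraints.

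First I would observe that, by the assignment constraints \eqref{eq:b-sum}, each of the two sums in \eqref{eq:con} is bounded by one on its own: vertex $i$ receives exactly one label, so $\sum_{\ell \in L_i} x^\ell_i \leq 1$, and likewise $\sum_{\ell' \in L_{i'}} x^{\ell'}_{i'} \leq 1$. Consequently the only way \eqref{eq:con} could fail is if both sums were equal to one simultaneously, and it suffices to rule this case out.

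Next I would suppose, toward a contradiction, that both sums equal one in some feasible solution. Then there exist labels $\ell \in L_i$ and $\ell' \in L_{i'}$ with $x^\ell_i = x^{\ell'}_{i'} = 1$, i.e. $f(i)=\ell$ and $f(i')=\ell'$. By the defining hypothesis on $L_i$ and $L_{i'}$ we have $|\ell-\ell'| \leq k$, so the edge $\{i,i'\}$ would contribute antibandwidth $|f(i)-f(i')| \leq k$. This contradicts the fact that every feasible solution of $(F_E(k))$ has $AB_f \geq k+1$. Hence the two sums cannot both equal one, and \eqref{eq:con} holds for every feasible solution, which is exactly validity.

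I do not expect a genuine obstacle; the statement is an immediate consequence of the two facts just invoked. The only point requiring care is to appeal to the already-proven correspondence between feasible solutions of $(F_E(k))$ and labelings of antibandwidth at least $k+1$, rather than reasoning directly from \eqref{eq:objk}: arguing from the packing constraints would force one to exhibit a specific window index $\ell_2$ witnessing the conflict, which is more fiddly and ultimately redundant compared with the clean combinatorial argument above.
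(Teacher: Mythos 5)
Your proof is correct and takes essentially the same route as the paper, which justifies \eqref{eq:con} with exactly this conflict argument: validity ``follows from their definition and the fact that every vertex gets exactly one label,'' i.e., if two of the variables were one, the induced labeling $f$ would have $AB_f \leq k$, contradicting the previously established correspondence between feasible solutions of $(F_E(k))$ and labelings with $AB_f \geq k+1$. Your write-up merely spells out in full the argument the paper leaves as a one-line remark.
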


The inequalities \eqref{eq:con} can be further generalized as follows using the same conflict-arguments, i.e., if two of the variables would be one, the solution would induce a labeling $f$ with $AB_f \leq k$.

\begin{observation}
Let $C\subseteq V$ be a set of vertices forming a clique in $G$ and let $L_c$ be sets of labels, one for each $c \in C$,
such that for any $\ell \in L_c$ and $\ell' \in L_{c'}$ for $c, c' \in C$, we have $|\ell-\ell'| \leq k$. Then inequalities
\begin{equation}
 \sum_{c \in C} \sum_{\ell \in L_c}x^{\ell}_c \leq 1 \label{eq:cli} \tag{CLIQUE-E}
\end{equation}
are valid for $(F_E(k))$.
\end{observation}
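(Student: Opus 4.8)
The plan is to show that no integer-feasible point of $(F_E(k))$ can have two of the terms in the sum $\sum_{c \in C}\sum_{\ell \in L_c} x^\ell_c$ equal to one; since all these variables are binary and non-negative, this immediately yields the claimed bound of $1$. I would begin by recalling the semantics of a feasible solution established earlier for $(F_E(k))$: by the assignment constraints \eqref{eq:a-sum}--\eqref{eq:b-sum} the $x$-variables encode a genuine labeling $f$, and by the set-packing constraints \eqref{eq:objk} this labeling satisfies $AB_f \geq k+1$, i.e.\ $|f(i)-f(i')| \geq k+1 > k$ for every edge $\{i,i'\} \in E$.

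Next I would argue by contradiction. Suppose two distinct variables $x^{\ell}_c$ and $x^{\ell'}_{c'}$ appearing in \eqref{eq:cli} both take the value one, with $\ell \in L_c$ and $\ell' \in L_{c'}$. I would split into two cases according to whether the two variables sit on the same vertex. If $c = c'$, then $f$ would assign two distinct labels $\ell \neq \ell'$ to the single vertex $c$, contradicting \eqref{eq:b-sum}, so this case cannot occur in any feasible point. If $c \neq c'$, then because $C$ is a clique we have $\{c,c'\} \in E$, and the encoded labeling sets $f(c)=\ell$, $f(c')=\ell'$. The hypothesis of the observation applied to the pair $c,c' \in C$ gives $|\ell-\ell'| \leq k$, so the antibandwidth of the edge $\{c,c'\}$ is at most $k$, contradicting $AB_f \geq k+1$.

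Having ruled out both cases, at most one of the variables in \eqref{eq:cli} can equal one in any feasible solution, which establishes validity. I do not expect a genuine obstacle here --- this is the same conflict argument already used for \eqref{eq:con}, lifted from a single edge to a clique --- but the one point requiring care is the case distinction on $c=c'$ versus $c\neq c'$: the hypothesis as stated quantifies over all $c,c' \in C$ including $c=c'$, so one must note that the coincident-vertex case is excluded not by the antibandwidth condition but directly by the single-label assignment constraint, while the distinct-vertex case is where the clique structure and the distance hypothesis $|\ell-\ell'|\le k$ do the work.
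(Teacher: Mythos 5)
Your proof is correct and takes essentially the same approach as the paper, which justifies \eqref{eq:cli} by the same conflict argument used for \eqref{eq:con}: if two variables in the sum were one, the encoded labeling $f$ would satisfy $AB_f \leq k$, contradicting the property $AB_f \geq k+1$ enforced by \eqref{eq:objk}. Your explicit treatment of the $c=c'$ case via the assignment constraint \eqref{eq:b-sum} is a detail the paper leaves implicit, but it does not alter the argument.
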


The formulation $(F_E(k))$ and in particular inequalities \eqref{eq:cli} show the strong relation of the \ABP\ to the SSP: In the SSP, any feasible solution is just allowed to take one vertex for each clique in the graph. In the \ABP, for a labeling $f$ with $AB_f\geq k+1$ to exist, any set of vertices getting labels within a distance of $k$ must be a stable set. Thus, in particular, for each clique in the graph and a set of labels $L$ where all labels in $L$ are within a distance of $k$, exactly one vertex in the clique can be given a label of set $L$, which is exactly what \eqref{eq:cli} enforces.

\subsection{Implementation details \label{sec:details}}

In this section, we discuss implementation details of the branch-and-cut algorithm we developed based on $(F)$ and the iterative algorithm based on $(F_E(k))$ (in which the individual problems $(F_E(k))$ for a fixed $k$ also get solved with a branch-and-cut algorithm). While both formulations are compact (i.e., have a polynomial number of constraints), the number of constraints is still very large, and the constraints also are very dense (i.e., have many non-zero coefficients). Thus, we do not add them all in the beginning, but separate them on-the-fly when they are violated. Moreover, we also separate inequalities \eqref{eq:valid2}, \eqref{eq:cliquen}., resp. \eqref{eq:cli}, details are given in the following. We set the limit for separation-rounds to twenty at the root-node and to one at all the other nodes in the branch-and-cut tree to avoid overloading the LPs with too many inequalities. 
The coefficients of all inequalities used when solving $(F)$ are downlifted using the best upper bound obtained by applying Theorem \ref{thm:bounds} and we also use this upper bound as the termination criterion for the iterative algorithm (i.e., thus we solve the SSP and GCP before starting our algorithms). In both approaches, we initialize the MIP-model with just constraints \eqref{eq:a-sum}, \eqref{eq:b-sum} and the symmetry breaking discussed below.

\paragraph{Symmetry breaking}

As the objective function of the problem uses the absolute value, any labeling and its reversed version give the same objective function value. Thus, to break these symmetries, we add constraints such that the vertex with maximum degree in the graph must have a label at most $\lceil|V|/2\rceil$ (if there is more than one vertex with maximum degree, we take the one with smallest index). We do this by fixing the corresponding non-allowed $x^\ell_i$-variables to zero.

\paragraph{Separation routine for $(F)$}

We do different separation routines depending on whether the current solution$(\tilde x, \tilde d)$\sloppy\ to the LP-relaxation at the current branch-and-cut node is integral or not. If the solution is integral, we simply check by inspection, if any of the constraints \eqref{eq:objn} is violated, and add any violated constraints. Note that this would already be enough to ensure correctness of the branch-and-cut (CPLEX, the MIP-solver we used, also produces integral solution with its internal heuristics, we also check these solutions in a similar fashion).

Given a fractional solution $(\tilde x, \tilde d)$, we first try to find violated inequalities \eqref{eq:valid2}. This is done by enumeration, and we add at most one violated inequality for each vertex, i.e., when we found a violated inequality for a vertex, we stop enumeration for this vertex and move to the next one.

If the previous procedure does not produce any violated inequalities, we try a heuristic separation of inequalities \eqref{eq:cliquen}. We note that compared to separation of clique-type inequalities in other problems such as e.g, the SSP, in our case we also need to find a set of labels to define the inequalities. We thus first compute a \emph{pseudoposition} $p_i$ induced by the current fractional $\tilde x$ for each vertex $i$ by $p_i=\sum_{\ell \in L} \ell \tilde x^\ell_i$. We then iterate over each edge $e =\{i,i'\} \in E$ and greedily try to construct a violated inequality  \eqref{eq:cliquen} containing this edge by iteratively adding more vertices, which form a clique. More precisely, for an $e =\{i,i'\}$, our initial $C=\{i,i'\}$ and to increase $C$, we take all vertices, which are adjacent to all vertices in $C$ as candidate set $C'$. We then rank each $i'' \in C'$ by calculating $score_{i''}=\sum_{i \in C} |p_i-p_{i''}|$. The vertex with minimal score gets added to $C$ and the procedure gets repeated, until $C'=\emptyset$, i.e., there exists no vertex to further grow the clique $C$. With this approach, we try to find cliques $C$, where the vertices have labels which are near to each other, as such a labeling would induce a small value of $b$ and thus hopefully leads to a violated inequality. To specify an inequality \eqref{eq:cliquen} for a given clique $C$, we also need a set of $|C|-1$ labels. For this, we calculate $labelscore_\ell= \sum_{c \in C} \tilde x^\ell_c$ and take the  $|C|-1$ labels with the highest score. Whenever a violated inequality is found, we mark all the edges in the corresponding clique, and we do not consider marked edges for the remainder of the separation procedure.

Finally, if also no violated inequalities \eqref{eq:cliquen} were found, we try a partial enumeration to heuristically separate inequalities \eqref{eq:objn}: For each edge $e =\{i,i'\} \in E$, we check, if the inequality \eqref{eq:objn} for $\ell$ with maximum $\tilde x^\ell_i+\tilde x^\ell_{i'}$ is violated.

\paragraph{Separation routine for $(F_E(k))$}


Inequalities \eqref{eq:objk} separated by enumeration. Once a violated inequality for an edge $e =\{i,i'\} \in E$ is found, we try to lift it to a clique inequality \eqref{eq:cli} using an iterative heuristic (as set of labels for each vertex in \eqref{eq:cli}, we consider $\ell \in [\ell_2, \ell_2+k]$ , where $\ell_2$ is the label defining the violated inequality \eqref{eq:objk}). We initialize $C$ with $\{i,i'\}$, and consider as candidate vertices $C'$ for lifting all vertices adjacent to $C$. For each of these vertices $i'' \in C'$ we calculate a score $\big(\sum_{\ell_2 \leq \ell \leq \ell_2+k} \tilde x^\ell_{i''}+\epsilon \big) \cdot \delta(i)$, where $\epsilon=0.0001$. The vertex with the biggest score is added to $C$, and the process is repeated, until there is no more vertex available to increase $C$. Similar to the separation of clique inequalities \eqref{eq:cliquen}, once an edge occurs in an added inequality, it is not considered anymore in the remainder of the separation procedure.

%

\paragraph{Branching} During the branch-and-cut, the branch-and-cut trees can become very unbalanced, as branching on an $x^\ell_i$-variable fixes a vertex to a label in one branch, and forbids this label for this vertex in the other branch, while all other (not previously fixed) labels are still possible for this vertex. We thus implemented our own branching strategy. Given the solution $(\tilde x)$ of an LP-relaxation at a node, we consider all vertices $i$, where the subvector $(\tilde x_i)$ has fractional entries as branching candidates. Among these vertices, we take the one with the highest degree to branch on. If there is more than one candidate, we take the one with the largest number of fractional entries in the subvector $(\tilde x_i)$, if there are still ties we break them arbitrarily, i.e., we take the vertex with the smallest index. Regarding the branching itself, we do not branch on a single label, but branch on $\sum_{\ell' < \ell} x^{\ell'}_i$ for a given label $\ell$. In one branch, this sum must be zero, and in the other branch, this sum must be one. The label $\ell$ is determined as the largest $\ell'$ with $\tilde x^{\ell'}_i>0$.

\paragraph{Starting heuristics and primal heuristic}

We implemented two starting heuristics to create an initial starting solution, and also a primal heuristic which is called during the branch-and-cut and guided by the value of the LP-solution at the current branch-and-cut node. All three heuristics consecutively iterate over the labels in an increasing way, starting at label $1$, and give each label to an yet unlabeled vertex, which is then removed for consideration for the remaining labels (i.e., there is no label-reassignment during the heuristics). 

The first starting heuristic is in similar spirit to construction heuristics used in e.g., \cite{bansal2011memetic,duarte2011grasp,lozano2012variable}. Given a vertex $i^* \in V$, we construct a breadth-first-search (bfs) tree $T_{i^*}$ starting from $i^*$. This tree has layers $T_{i^*}(k)$, $k\geq 0$, where layer $T_{i^*}(k)$ contains all vertices $i'$ with $k-1$ vertices on the path between it and ${i^*}$ in the tree (e.g., $T_{i^*}(1)$ contains all vertices adjacent to $i^*$), and we define $T_{i^*}(0)=i^*$. Note that by construction of a bfs-tree, adjacent vertices in $G$ are either on the same layer or in two consecutive layers $k,k+1$. Naturally, to get a large antibandwidth, we do not want to give adjacent vertices labels which are close to each other. Thus, we give label one to vertex ${i^*}$ and then repeatedly iterate through the even and odd layers of $T_{i^*}$ to assign the remaining labels to vertices. By switching between even and odd layers, we try to avoid giving close labels to vertices which are adjacent and in consecutive layers in the tree. However, vertices on the same layer may also be adjacent in $G$. Thus, whenever a vertex $i'$ gets assigned a label, we mark all vertices adjacent to $i'$ and we do not consider marked vertices for assigning labels in the current iteration. The order in which we consider the vertices within a layer for assigning labels is induced by the following three criteria: i) resulting antibandwidth for this vertex if the vertex gets assigned the currently considered label (all unlabeled vertices are defined to have label $|V|-1$ for this calculation), ii) degree of the vertex in the graph consisting of the yet unlabeled vertices, iii) maximum degree of an adjacent vertex in the graph consisting of the yet unlabeled vertices. The vertices in a layer are ordered in descending order according to i), ties are first broken by descending order according to ii), if there still remain ties, they are broken by descending order according to iii), the remaining ties are broken arbitrarily, i.e., the vertex with the smallest index is taken. 

In the second heuristic, we keep a bound $B^H$, which we initialize with the best $UB$ according to Theorem \ref{thm:bounds}
We start by assigning some given vertex $i^*$ the label one, and the continue assigning the remaining labels. For assigning the currently considered label, we consider all the unlabeled vertices, where assigning the current label would result in an antibandwidth of the vertex with value at least $B^H$ (similar to i) above, unlabeled vertices are defined to have label $|V|-1$ for this calculation). If there is more than one vertex fulfilling this condition, we use criteria ii) and then iii) for tie-breaking. If there is no vertex fulfilling the condition, we decrease $UB^H$ until there is again at least one vertex fulfilling the condition. We run both starting heuristics with all vertices $i \in V$ as $i^*$.

The primal heuristic also uses a bound $B^H$, which get initialized to the value of the current incumbent solution plus one. We again start the labeling with assigning label one, and then proceed to the next label. For assigning any label $\ell$ (including label one), we sort all the unlabeled vertices $i$ in descending order according to $ (\tilde x^\ell_i+\epsilon)\cdot \delta(i)$, where $\epsilon=0.0001$. We iterate through this ordered list of vertices and assign $\ell$ to the first vertex, which fulfills the condition that assigning $\ell$ to it would result in an antibandwidth of the vertex with value at least $B^H$. If there is no vertex fulfilling this condition, we decrease $B^H$ until there is at least one vertex fulfilling it.

Given a solution $f^H$ obtained by any of the heuristics, we try to improve it with an iterative local search procedure. For the solution $f^H$, we calculate the set of edges $minE$, which are all the edges with $|f^H(i)-f^H(i')|=AB_{f^H}$, i.e., the edges with the minimum bandwidth. We then iterate trough all the edges $e\{i,i'\} \in minE$ and try to improve the bandwidth, by switching the labels $f^H(i)$ or $f^H(i')$ with labels of vertices $i'' \neq i,i'$. For each edge, we apply the switch resulting in the largest bandwidth considering both vertices involved in the switch. We update $minE$ and repeat this procedure until no more improvement of the bandwidth is possible. 



\section{A constraint programming formulation \label{sec:cp}}

The \ABP\ can also be straightforwardly modeled as constraint programming (CP) problem (see, e.g., \cite{rossi2006handbook} for more on CP) using the $abs$ and $alldifferent$-constraints.

\begin{align}
\max b& \tag{C.1} \\
b \leq abs(l_i-l_{i'}) & \quad \forall \{i,i'\}\in E &\tag{C.2}\\
alldifferent(l) &\tag{C.3} \\
l_i \in \{1,2,\ldots, |V|\},& \quad \forall i \in V \tag{C.4}
\end{align}

Similar to the MIP-approaches, we also add a symmetry breaking constraint restricting the label of the vertex with maximum degree to be at most $\lceil|V|/2\rceil$ when solving the problem as CP.

\newcommand{\FA}{$(F)$}
\newcommand{\FTWO}{$(F_e(k))$}
\newcommand{\FPAPER}{$(F_{lit})$}
\newcommand{\CP}{CP}

\section{Computational results \label{sec:comp}}

The branch-and-cut framework and the iterative MIP algorithm, as as well as the MIPs for the SSP and GCP were implemented in C++ using CPLEX 12.9 as MIP solver. To solve the CP formulation, we use the CP optimizer of CPLEX.
The runs were carried out on an Intel Xeon E5 v4 CPU with 2.5 GHz and 6GB memory using a single thread, with timelimit for a run set to 1800 seconds. As timelimit of the MIPs for solving the SSP and GCP we set 10 seconds. Note that the calculations for the upper bounds based on SSP and GCP are still valid when $\alpha(G)$ is replaced with an upper bound, and  $\chi(G)$ is replaced with a lower bound, both are available even if the corresponding MIP is not solved to optimality within the given timelimit. All CPLEX parameters were left at their default values, except the choice of the simplex algorithm used within the branch-and-cut. As default, CPLEX would use the \emph{dual simplex} (which allows for faster re-solving after adding constraints), however, as our constraints are very dense, and also the number of variables is much larger as the number of (added) constraints, using the \emph{primal simplex} turned out to be more efficient in preliminary computations (see e.g., \cite{klotz2013practical, klotz2013practical2} for a discussion on the choice of LP-algorithm).

\subsection{Instances}

In our computational study, we focused on the HarwellBoeing instances, which are the main instances used in performance tests for the ABP. These instances are based on the Harwell-Boeing Sparse Matrix Collection, which is a "is a set of standard test matrices arising from problems in linear systems, least squares, and eigenvalue calculations from a wide variety of scientific and engineering disciplines", see \url{https://math.nist.gov/MatrixMarket/collections/hb.html}. The instances are available at and are available at \url{https://www.researchgate.net/publication/272022702_Harwell-Boeing_graphs_for_the_CB_problem} and have also been used for testing algorithms for other labeling problem, see, e.g., \cite{rodriguez2015tabu,sinnl2018}. The set contains instances with up to 715 vertices and 2975 edges, details of the number of vertices and edges of the individual graphs are given in Table \ref{ta:main} in columns $|V|$ and $|E|$. The instances with up to including 118 vertices are denoted as \emph{small}, the remaining ones as \emph{large}, both groups contain twelve instances.
We observe that also other instances have been used in testing, e.g., paths, grids or Hamming graphs in \cite{lozano2012variable}, however, for these specific graphs, optimal solution values are known due to theoretical results.

\subsection{Results}

First, we are interested in the strength of the LP-relaxation of the new model $(F)$ compared to the previous model $(F_{lit})$
In Figure \ref{fig:lpgap}, we give the LP-gaps. The gaps are calculated as $100\cdot (UB_{LP}-z^*)/z^*$, where $UB_{LP}$ is the value of the respective LP-relaxation, and $z^*$ is the value of the best known feasible solution for the instance. For the best solution value, we take the results from Table 6 in \cite{lozano2012variable}, which gives a comparison of the state-of-the-art heuristics from \cite{bansal2011memetic,duarte2011grasp,lozano2012variable}, and also the solution values our algorithms obtained (these results are discussed later in this section in detail). For these runs, we directly solve the LP-relaxation of the compact model $(F)$ (and $(F_{lit})$, which we also implemented) without any lifting of coefficients or valid inequalities. We made runs only for the \emph{small} instances, as for larger ones, solving the compact LPs becomes computationally burdensome. 

\begin{figure}[h!tb]
	\centering
		\includegraphics[width=0.8\textwidth]{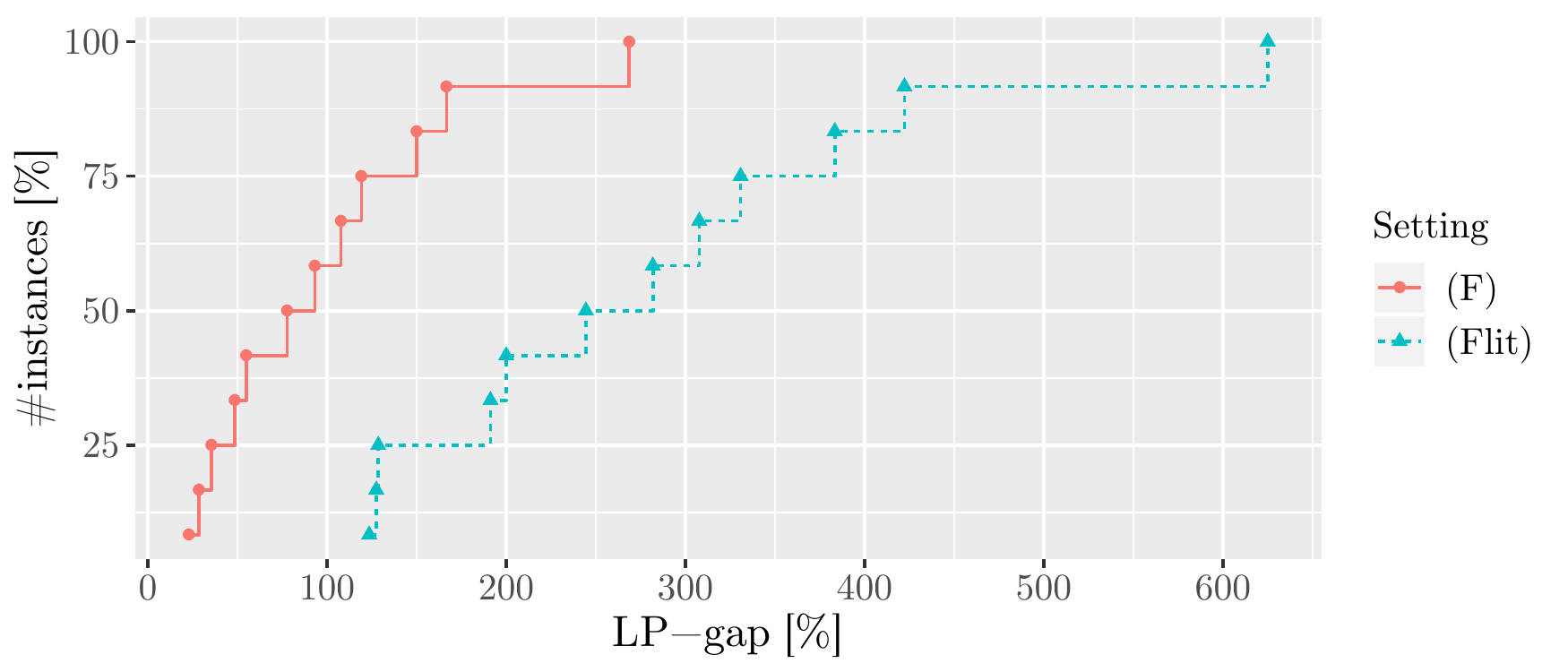}
		\caption{LP-gap of formulations $(F)$ and $(F_{lit})$ with respect to the best known solution for the \emph{small} instances.\label{fig:lpgap}}
\end{figure}

Figure \ref{fig:lpgap} shows that the new model brings a big improvement in the value of the LP-gaps. The gaps of $(F)$ are smaller for all instances, the largest gap for $(F_{lit})$ is over 600\%, while the largest gap for $(F)$ is under 300\%. In general, the gaps of $(F)$ seem about 100-200\% smaller, and more than half of the gaps of $(F)$ are under 100\%.



Next, we come to our main results, which are contained in Table \ref{ta:main}.  In this table, we report for each instance

\begin{itemize}
	\item the number of vertices (column $|V|$)
	\item the number of edges (column $|E|$)
	\item the upper bounds calculated using Theorem \ref{thm:bounds} (columns $T\ref{thm:bounds}.1, T\ref{thm:bounds}.2, T\ref{thm:bounds}.3, T\ref{thm:bounds}.4$), including the time needed for solving the SSP and GCP for calculating $\alpha(G)$ and $\chi(G)$ (columns $t_{\alpha(G)}$ and $t_{\chi(G)}$)
	\item the upper bound $UB_{D}$ reported in \cite{duarte2011grasp} for the MIP $(F_{lit})$. For these runs, the authors of \cite{duarte2011grasp} used CPLEX 12.3 with a timelimit of 24 hours; note that for a fairer comparison, we have also re-implemented $(F_{lit})$ and run it with CPLEX 12.9 and a timelimit of 1800 seconds, and give detailed results for our runs, see below
	\item the best solution value from literature (column $z_{L}$, taken from Table 6 of \cite{lozano2012variable})
	\item the best optimality gap using results from literature (column $g_{L}[\%]$, calculated as $100 \cdot (UB_L-z_L)/z_L$, where $UB_L$ is the best upper bound reported in literature, i.e., the minimum of $T\ref{thm:bounds}.1, T\ref{thm:bounds}.2$ and the upper bound $UB_D$ reported in \cite{duarte2011grasp}. Note that Table 6 of \cite{lozano2012variable} does not consider the latter upper bound, and the percentage deviation they report is calculated as $100 \cdot (UB-z)/UB$ for their upper bounds and solution value
	\item the upper bound $UB$, best solution value $z^*$ and runtime obtained by our approaches $(F)$, $CP$, as well as by our re-implementation of $(F_{lit})$ using CPLEX 12.9. For $(F_e(k))$, we report just the best solution value and the runtime, as this algorithm gives no upper bound, except when it manages to prove optimality (which is indicated by a runtime smaller than the timelimit in the table).
	\item the best optimality gap $g^*$ obtained after taking into account our new results. The gap is calculated as $100 \cdot (UB_B-z_B)/z_B$, where $UB_B$ and $z_B$ are the best available upper bounds, resp., best solution values taking into account our results, and also the best solution values previously reported in literature (i.e., as given in column $z_L$).
\end{itemize}

\begin{landscape}
	\setlength{\tabcolsep}{3pt}
\begin{table}[ht]
	\centering
	\caption{Detailed results, the horizontal line shows the difference between the \emph{small} and \emph{large} instances. Best entries for $UB$ and best solution value of each instance are given in bold. \label{ta:main}} 
	\begingroup\footnotesize
	\begin{tabular}{lrr|rrrrr|rrrr|rrr|rrr|rr|rrr|r}
		\toprule
		& & & T1.1 & T1.2 & & & & \multicolumn{2}{|c}{T1.3} & \multicolumn{2}{c}{T1.4}  & \multicolumn{3}{|c|}{\FPAPER} & \multicolumn{3}{|c|}{\FA} & \multicolumn{2}{|c|}{\FTWO} & \multicolumn{3}{|c|}{\CP} &   \\ name & $|V|$ & $|E|$ & UB & UB & UB$_D$ & $z_L$ & $g_L$ & UB & t[s] & UB & t[s] & UB & $z^*$ & t[s] & UB & $z^*$ & t[s] & $z^*$ & t[s] & UB & $z^*$ & t[s] & $g^*$ \\ \midrule
		pores1 & 30 & 103 & 13 & 16 & 8 & \textbf{6} & 33.3 & 8 & 1 & 9 & 1 & 8 & \textbf{6} & TL & 8 & \textbf{6} & TL & \textbf{6} & 24 & 8 & \textbf{6} & TL & 0.0 \\ 
		ibm32 & 32 & 90 & 15 & 19 & \textbf{9} & \textbf{9} & 0.0 & 13 & 1 & 10 & 1 & \textbf{9} & \textbf{9} & 73 & \textbf{9} & \textbf{9} & 280 & \textbf{9} & 27 & \textbf{9} & \textbf{9} & 2 & 0.0 \\ 
		bcspwr01 & 39 & 46 & 19 & 29 & \textbf{17} & \textbf{17} & 0.0 & 21 & 1 & 19 & 1 & \textbf{17} & \textbf{17} & 138 & 18 & \textbf{17} & TL & \textbf{17} & 7 & \textbf{17} & \textbf{17} & 1 & 0.0 \\ 
		bcsstk01 & 48 & 176 & 22 & 29 & 11 & 8 & 37.5 & 13 & 1 & \textbf{9} & 1 & 22 & 8 & TL & \textbf{9} & 8 & TL & \textbf{9} & 29 & \textbf{9} & \textbf{9} & 4 & 0.0 \\ 
		bcspwr02 & 49 & 59 & 24 & 38 & 22 & \textbf{21} & 4.8 & 27 & 1 & 24 & 1 & 22 & \textbf{21} & TL & 23 & \textbf{21} & TL & \textbf{21} & 629 & \textbf{21} & \textbf{21} & 5 & 0.0 \\ 
		curtis54 & 54 & 124 & 26 & 38 & \textbf{13} & \textbf{13} & 0.0 & 22 & 1 & \textbf{13} & 1 & \textbf{13} & \textbf{13} & 814 & \textbf{13} & 12 & TL & \textbf{13} & 5 & \textbf{13} & \textbf{13} & 9 & 0.0 \\ 
		will57 & 57 & 127 & 28 & 41 & 14 & \textbf{13} & 7.7 & 25 & 1 & 14 & 1 & 18 & 12 & TL & 14 & \textbf{13} & TL & \textbf{13} & 10 & \textbf{13} & \textbf{13} & 21 & 0.0 \\ 
		impcolb & 59 & 281 & 29 & 35 & 14 & \textbf{8} & 75.0 & 21 & 1 & \textbf{8} & 1 & 19 & \textbf{8} & TL & \textbf{8} & \textbf{8} & 4 & \textbf{8} & 1 & 22 & \textbf{8} & TL & 0.0 \\ 
		ash85 & 85 & 219 & 42 & 64 & \textbf{27} & 21 & 28.6 & 29 & 1 & 28 & 1 & 41 & 17 & TL & 28 & 19 & TL & 20 & TL & 32 & \textbf{22} & TL & 22.7 \\ 
		nos4 & 100 & 247 & 50 & 78 & 47 & \textbf{34} & 38.2 & \textbf{40} & 1 & 49 & 1 & 49 & 27 & TL & \textbf{40} & 32 & TL & 32 & TL & 47 & \textbf{34} & TL & 17.6 \\ 
		dwt234 & 117 & 162 & 58 & 99 & 58 & 50 & 16.0 & 76 & 1 & 58 & 1 & 58 & 46 & TL & 58 & 46 & TL & 49 & TL & \textbf{57} & \textbf{51} & TL & 11.8 \\ 
		bcspwr03 & 118 & 179 & 59 & 99 & 57 & \textbf{39} & 46.2 & 57 & 1 & \textbf{39} & 1 & \textbf{39} & \textbf{39} & 874 & \textbf{39} & \textbf{39} & 9 & \textbf{39} & 1 & \textbf{39} & \textbf{39} & 1 & 0.0 \\ 
		\midrule
		bcsstk06 & 420 & 3720 & 210 & 334 & 210 & 32 & 556.2 & 72 & 1 & \textbf{38} & 5 & 343 & 1 & TL & 71 & 29 & TL & \textbf{33} & TL & 186 & 30 & TL & 15.2 \\ 
		bcsstk07 & 420 & 3720 & 210 & 334 & 210 & 31 & 577.4 & 72 & 1 & \textbf{38} & 5 & 343 & 1 & TL & 71 & 29 & TL & \textbf{33} & TL & 186 & 30 & TL & 15.2 \\ 
		impcold & 425 & 1267 & 212 & 375 & 212 & 103 & 105.8 & 173 & 2 & \textbf{141} & 1 & 353 & 7 & TL & 425 & 91 & TL & 99 & TL & 195 & \textbf{110} & TL & 28.2 \\ 
		can445 & 445 & 1682 & 221 & 387 & 221 & \textbf{82} & 169.5 & \textbf{120} & 1 & 148 & TL & 407 & 6 & TL & 445 & 78 & TL & 78 & TL & 217 & 74 & TL & 46.3 \\ 
		494bus & 494 & 586 & 247 & 460 & 247 & \textbf{227} & 8.8 & 278 & 1 & \textbf{246} & 1 & 391 & 24 & TL & 494 & 219 & TL & 219 & TL & \textbf{246} & 217 & TL & 8.4 \\ 
		dwt503 & 503 & 2762 & 250 & 429 & 250 & 53 & 371.7 & 127 & 1 & \textbf{71} & 7 & 460 & 2 & TL & 503 & 46 & TL & 51 & TL & 246 & \textbf{56} & TL & 26.8 \\ 
		sherman4 & 546 & 1341 & 272 & 494 & \textbf{272} & \textbf{261} & 4.2 & 273 & 1 & 545 & 1 & 545 & 1 & TL & 546 & 256 & TL & 256 & TL & 543 & 211 & TL & 4.2 \\ 
		dwt592 & 592 & 2256 & 295 & 525 & 295 & \textbf{113} & 161.1 & \textbf{150} & 1 & 197 & TL & 492 & 6 & TL & 592 & 103 & TL & 103 & TL & 275 & 99 & TL & 32.7 \\ 
		662bus & 662 & 906 & 331 & 619 & 331 & \textbf{220} & 50.5 & 351 & 1 & \textbf{220} & 1 & 660 & 1 & TL & 347 & 219 & TL & 219 & TL & \textbf{220} & \textbf{220} & 193 & 0.0 \\ 
		nos6 & 675 & 1290 & 337 & 624 & \textbf{337} & \textbf{329} & 2.4 & 338 & 1 & 674 & 1 & 674 & 1 & TL & 675 & 326 & TL & 326 & TL & 672 & 271 & TL & 2.4 \\ 
		685bus & 685 & 1282 & 342 & 634 & 342 & \textbf{136} & 151.5 & 313 & 1 & \textbf{136} & 1 & 621 & 5 & TL & 242 & \textbf{136} & TL & \textbf{136} & 1 & 342 & \textbf{136} & TL & 0.0 \\ 
		can715 & 715 & 2975 & 357 & 638 & 357 & \textbf{115} & 210.4 & 208 & 1 & \textbf{142} & TL & 649 & 6 & TL & 242 & 112 & TL & 112 & TL & 333 & 112 & TL & 23.5 \\ 
		\bottomrule
	\end{tabular}
	\endgroup
\end{table}

\end{landscape}

Several interesting results can be seen in Table \ref{ta:main}.
The best known solution values from literature are already quite good, however, for seven instances, we were able to find better values. Five of these improvements were achieved by approach CP, and three by $(F_e(k))$ (for one instance, both managed it). This is a strong contrast to the statement "On the contrary, the [\ldots] HarwellBoeing instances are actually a challenge for modern heuristic methods" in \cite{lozano2012variable}. Our results reveal, that the large gaps were mostly caused by bad upper  bounds. Indeed, for none of the instances, bounds $T\ref{thm:bounds}.1$ or $T\ref{thm:bounds}.2$ (as used in \cite{lozano2012variable} to assess the quality of the best obtained solution values) are among the best bounds, they are often very far away. Also, the $UB_D$ of the MIP $(F_{lit})$ as reported in \cite{duarte2011grasp} is similar to the best known upper bound after our current study for only six instances. Amongst the bounds provided by Theorem \ref{thm:bounds}, the bound based on the chromatic number $\chi(G)$ seems to be the strongest, for twelve instances, it provides the (sometimes jointly) best upper bounds. For six instances, the provided upper bound is actually the same as the best solution value, which proves optimality. The associated NP-hard GCP can be solved within the given timelimit of ten seconds for all but three instances. 

Using our approaches, the optimal solution value is now known for eleven instances (nine out of the twelve \emph{small} instances, and for two out of twelve \emph{large} instances), compared to three instances before. The optimality gaps are improved for further eleven instances, the only (previously unsolved) instances, where we were not able to achieve any improvement are \texttt{sherman4} and \texttt{nos6} (for which the gaps are already quite small with 4.2\% and 2.4 \%). The largest optimality gap is now 46.3\% (for instance \texttt{can445}), compared to 577.4\% before (for instance \texttt{bcsstk07}). 

Regarding effectiveness of our approaches, both $(F_k(e))$ and CP work much better than $(F)$. Surprisingly, although $(F)$ has better LP-gaps compared to $(F_{lit})$, it actually gives worse performance in solving the problem compared to our re-implementation of $(F_{lit})$. This could be caused by the fact that $(F_{lit})$ is sparser than $(F)$, so LP-solving is faster and more nodes can be enumerated. We can also note that this re-implementation is more effective within our timelimit of 1800 seconds, compared to the runs made in \cite{duarte2011grasp} with a timelimit of 24 hours. This is likely caused by the improvements in CPLEX from version 12.3 to 12.9, and also be the better computer we used in our runs. The approach $(F_k(e))$ manages to prove optimality for ten instances within the timelimit, while CP manages to do so for eight instances, our re-implementation of $(F_{lit})$ for four, and $(F)$ for three. 

\section{Conclusions \label{sec:concl}}

In this note, we considered the antibandwidth problem (ABP) and provided improved upper and lower bounds for standard benchmark instances from literature, for which the optimality gaps were up to 577\%. We presented new MIP-formulations for the model, and designed a branch-and-cut algorithm and an iterative solution algorithm based on them. We also developed a constraint programming approach and calculated bounds using the NP-hard stable set problem and graph coloring problem. In a computational study, we showed that the developed approaches allow to find the proven optimal solution for eight instances from literature (out of a commonly used set of 24 benchmark instances for this problem), where the optimal solution was unknown and also provide reduced gaps for eleven additional instances, including improved solution values for seven instances. The largest gap is now 46\%. 

There are several avenues for further work: Trying to improve the MIP-approaches is a possibility, however, for larger-scale instances, the size of MIP-models seems to become prohibitive for solving the problem. But exploring the MIP-approaches further, especially the connection to the stable set problem could maybe be interesting with respect to theoretical results, e.g., finding complete descriptions for certain graph classes. Moreover, as there is also some connection of the ABP to the graph coloring problem, trying to develop column generation/branch-and-price approaches could be a worthwhile topic, as for graph coloring problems, such approaches usually work quite well.
%
Aside from using a MIP-based approach directly to solve the problem, a combinatorial branch-and-bound algorithm could be an interesting idea. Such approaches, which typically work with \emph{partial labelings} and use problem-specific, graph-theoretic bounds for the problem at hand often are often quite effective for graph labeling problems \citep{caprara2005laying,marti2010branch,marti2013branch}. In case of the ABP, even solving NP-hard problems within the branch-and-bound to provide bounds could be a viable option, since our computational study showed, that the stable set problem and graph coloring problem can be solved very quickly for the standard benchmark instances of the ABP.

\section*{Acknowledgements}

The research was supported by the Austrian
Research Fund (FWF, Project P 26755-N19 and P 31366-NBL). The author wants to thank Georg Brandst\"atter for interesting conversations about the problem.



\end{document}